\newcolumntype{R}{>{\raggedleft\arraybackslash}X}
\newcolumntype{L}{>{\raggedright\arraybackslash}X}
\newcolumntype{C}{>{\centering\arraybackslash}X}
\newcolumntype{A}{>{\columncolor{gray!25}}C}
\newcolumntype{a}{>{\columncolor{gray!25}}c}
\newcolumntype{.}{D{.}{.}{-1}}
\renewcommand\p@subfigure{\arabic{figure}.}
\newenvironment{customlegend}[1][]{%
	\begingroup
	\csname pgfplots@init@cleared@structures\endcsname
	\pgfplotsset{#1}%
    }{%
	\csname pgfplots@createlegend\endcsname
	\endgroup
    }%
\def\addlegendimage{\csname pgfplots@addlegendimage\endcsname}
\setlist[itemize]{leftmargin=3\parindent}
\setlist[enumerate]{leftmargin=2\parindent}
\theoremstyle{plain}
\newtheorem{assumption}{Assumption}
\newtheorem{observation}{Observation}
\newtheorem{proposition}{Proposition}
\theoremstyle{definition}
\newtheorem{example}{Example}
\theoremstyle{remark}
\def\keywords{\vspace{.5em} 
{\textit{Keywords}:\,\relax%
}}
\def\JEL{\vspace{.5em} 
{\textbf{\emph{JEL} classification number}:\,\relax%
}}
\def\AMS{\vspace{.5em} 
{\textbf{\emph{AMS} classification number}:\,}}
\author{L\'aszl\'o Csat\'o\thanks{~e-mail: laszlo.csato@uni-corvinus.hu} }
\affil{Department of Operations Research and Actuarial Sciences \\ Corvinus University of Budapest \\ MTA-BCE 'Lend\"ulet' Strategic Interactions Research Group \\ Budapest, Hungary}
\title{A mathematical evaluation of vote transfer systems\thanks{~We are grateful to Daniel Bochsler, Tam\'as Solymosi, Bal\'azs Sziklai and Attila Tasn\'adi for their comments and suggestions. \newline
The research was supported by OTKA grant K 111797.}}
\date{\today}
\begin{document}

\maketitle

\begin{abstract}
The paper builds a general model of vote transfer systems on the basis of the current Hungarian electoral rules. It combines single-seat districts and list mandates with three possible compensation rule for 'wasted' votes in constituencies: no compensation (direct vote transfer, DVT), compensation for votes cast for losing party candidates (positive vote transfer, PVT) and compensation for all votes that are not necessary to win the district (negative vote transfer, NVT).

The model is studied in the case of two parties. When the number of votes for the majority party follows a uniform distribution in each district, DVT results in the greatest expected seat share, however, application of PVT, and, especially, NVT increases the probability of winning the election. The trade-off between vote transfer formulas and the number of list mandates reveals that the majority party should use an appropriately calibrated NVT system if it focuses on these two variables.

\keywords{Electoral system; mixed-member system; vote transfer; two-party system; Hungary}
\end{abstract}

\JEL{D71}

\AMS{91B12}

\section{Introduction}

Mixed-member electoral systems combine direct election of representatives with the aim of a proportional (or, at least more proportional) seat allocation. In this category, most works are devoted to mixed-member proportional (MMP) electoral systems \citep{MassicotteBlais1999, ShugartWattenberg2001}, while other correction mechanisms attracted less attention from academics.

This paper will discuss a close relative of mixed-member proportional systems, the so-called vote transfer system.
They involve two electoral tiers, one consisting of single-seat districts (SSDs) and a second consisting of list seats with a potential compensation for the votes unused in the first tier.
Vote transfer systems may have two advantages over common MMP rules: they seem to be more simple and intuitive (for example, there is no need for overhang seats) and they are immune to some kind of strategic manipulation such as list-splitting \citep{Bochsler2014}.

On the other side, if there are too many compensation seats relative to single-seat districts, a vote-transfer system gives robust incentives for strategic behaviour of the parties and may lead to controversial election outcomes. Furthermore, this design is inevitable in order to provide a fully proportional outcome \citep{Bochsler2015}. Accordingly, vote transfer systems can be suggested as an option just for increasing proportionality while retaining district representation \citep{Raabe2015}.

Probably, it was an important cause of adopting a similar scheme in Hungary during the democratic transition. The country used an especially complex version of positive vote transfer (PVT) system from 1990 to 2010 \citep{Benoit2001, BenoitSchiemann2001}, meaning that parties get a compensation for votes cast for their losing candidates in constituencies. Since there were different (territorial and national) party lists with their own divisor for converting votes into seats, it suffered from the population paradox, namely, some parties might lose by getting more votes or by the opposition obtaining fewer votes \citep{Tasnadi2008}.


After the governing party alliance won a two-thirds (super)majority in the 2010 election, the electoral rules were fundamentally rewritten in 2012: besides a reduction of the number of single-member districts (an issue investigated by \citet{BiroKoczySziklai2015}), the proportional representation pillar was significantly simplified. While vote transfer remained an essential part of the system, its calculation was also modified by the introduction of negative vote transfer (NVT), that is, in addition to compensation due to votes for losing party candidates, the party of the victorious candidate receives the vote difference between the its candidate and the runner-up, too.

This change inspired us to compare different vote transfer formulas. Besides the above mentioned PVT and NVT, the total lack of compensation for 'wasted' SSD votes is also examined under the name of direct vote transfer (DVT). Despite it immediately excludes to achieve clear proportionality, in a number of settings DVT may be functionally equivalent to PVT or NVT with an appropriate choice of the number of list mandates.

For this purpose, a mathematical model of the current Hungarian electoral system is built. All candidates in SSDs are associated with a party, list votes are calculated as the sum of votes in SSDs plus possible compensation according to the vote transfer formula DVT, PVT, or NVT. Share of mandates from SSDs can be chosen arbitrarily.

The model is studied in the case of two parties.
Analytical results can be derived if the number of SSDs won is extremely distorted for either party. They show that NVT is better than PVT, while DVT is the most risky for the majority party.
When the number of votes for the majority party follows a uniform distribution, its expected seat share can also be derived analytically, but its chance to get a majority is examined by a simulation. It turns out that DVT results in the greatest expected seat share, however, application of PVT, and, especially, NVT increases the probability of winning the election. The trade-off between vote transfer formulas and the number of list mandates is investigated, too, revealing that the majority party should use an appropriately calibrated NVT system if it aims these two variables.

The paper is structured as follows. Section~\ref{Sec2} presents the model and Section~\ref{Sec3} discusses its properties in a two-party system. Subsection~\ref{Sec31} deals with the cases where exact derivation is possible, and Subsection~\ref{Sec32} uses a simulation to address further issues.
Main analytical results are formulated in Propositions~\ref{Prop1}-\ref{Prop3}, and a conclusion of simulations is presented in Observation~\ref{Obs1}. 
Finally, Section~\ref{Sec4} summarizes the limitations of the model and possibilities for its extension.

\section{The model} \label{Sec2}

Vote transfer systems allocate parliamentary seats in two tiers. One part of mandates can be obtained in single-member constituencies under the majority (first-past-the-post) rule, while the other tier contains some compensatory seats to approach proportionality.
Theoretically, this system is able to deliver a fully proportional seat allocation, however, it depends not only on the weight of the second tier but on the behaviour of parties and voters \citep{Bochsler2014}. It may also require such a large number of compensatory seats that parties can find rational to try to lose some districts in order to gain a stronger representation overall \citep{Raabe2015}.

Nevertheless, vote transfer systems should not necessarily aim proportionality. They may be used to create a parliament without a too fragmented party structure (a potential failure of truly proportional systems), and, at the same time, avoid the domination of majority.

We formulate a mathematical model of a general vote transfer system, which is able to give any seat allocation between pure proportional and strictly majoritarian electoral outcomes. The main assumptions are as follows: 
\begin{itemize}[label=$\bullet$]
\item
It is a mixed-member electoral system with single-seat districts and list seats. All voter has only one vote.\footnote{~Another interpretation can be that there are two votes on separate ballots for each voter, but vote-splitting is not allowed.} All local candidates are associated with a party. Voters are identified by their party vote.
\item
SSDs are won by the party whose candidate get the most votes. Districts have an equal size.
\item
List seats are allocated according to the proportional rule. The voting weights of the parties can be arbitrary (that is, not only integers) in order to eliminate the effects of the apportionment rule used.
\item
There exists no threshold for the parties to pass in order to be eligible for list seats.
\item
The allocation of list mandates is based on the aggregated number of votes cast for the candidates of each party plus optional correction votes from SSDs. Three different transfer formulas are investigated:
\begin{enumerate}[label=\alph*)]
\item
\emph{Direct vote transfer} (DVT): there are no correction votes;
\item
\emph{Positive vote transfer} (PVT): in addition to the number of list votes under DVT, parties also get the votes of their candidates who lost in SSDs;
\item
\emph{Negative vote transfer} (NVT): in addition to the number of list votes under PVT, parties also get the 'wasted' votes of their victorious candidates, that is, the vote difference between the candidate and the runner-up.
\end{enumerate}
\item
Share of mandates available in SSDs is $\alpha \in \left[ 0,1 \right]$.
\end{itemize}

Example~\ref{Examp1} illustrates this electoral system.

\begin{example} \label{Examp1}
Consider an election between two parties ($A$ and $B$) and two districts:
\begin{center}
\begin{tabularx}{0.9\textwidth}{LCCC} \toprule
          		& Number of voters 	& Votes for party $A$ 	& Votes for party $B$ \\ \midrule
    District 1 	& 100				& 65  					& 35 \\
    District 2 	& 100				& 45  					& 55 \\ \midrule
    National 	& 200				& 110  					& 90 \\ \bottomrule
\end{tabularx}
\end{center}

Each party wins one mandate in the first tier, party $A$ obtains District 1 and party $B$ gains District 2.

Let $\alpha = 0.6$. Then seat allocation according to the three transfer formulas is as follows:
\begin{center}
\begin{tabularx}{\textwidth}{llCC} \toprule
    &  & Party $A$ & Party $B$ \\ \midrule
    & National vote share & $110/200 = 55\%$      & $90/200 = 45\%$ \\ \midrule 
    & Mandate share in SSDs & $1/2 = 50\%$      & $1/2 = 50\%$ \\ \midrule          
    \multirow{3}{*}{DVT}    & Direct list votes & $110$      & $90$ \\
          & List vote share  & $55\%$      & $45\%$ \\
          & \textbf{Seat share} & $\mathbf{52}$\textbf{\%}      & $\mathbf{48}$\textbf{\%} \\ \midrule
    \multirow{4}{*}{PVT}    & Direct list votes & $110$      & $90$ \\
          & Losing list votes & $45$      & $35$ \\
          & List vote share  & $155 / 280 \approx 55.36\%$      & $125 / 280 \approx 44.64\%$ \\
          & \textbf{Seat share} & $\mathbf{\approx 52.14}$\textbf{\%}      & $\mathbf{\approx 47.86}$\textbf{\%} \\ \midrule
    \multirow{5}{*}{NVT}	& Direct list votes & $110$      & $90$ \\
          & Losing list votes & $45$      & $35$ \\
          & Wasted list votes & $65-35 = 30$      & $55-45 = 10$ \\
          & List vote share  & $185 / 320 = 57.8125\%$      & $135 / 320 = 42.1875\%$ \\
          & \textbf{Seat share} & $\mathbf{53.125}$\textbf{\%}      & $\mathbf{46.875}$\textbf{\%} \\ \bottomrule
\end{tabularx}
\end{center}

In the case of DVT, list votes are simply the number of votes across all SSDs.
In the case of PVT, some indirect votes are added to this pool, namely, votes that do not win a seat, which is $35$ for party $B$ in District 1 and $45$ for party $A$ in District 2.
In the case of NVT, 'wasted' votes of the victorious candidates (the difference of votes between the two candidates) in SSDs are also added, which is $30$ for party $A$ in District 1 and $10$ for party $B$ in District 2.

Mandate share in SSDs and list vote shares are aggregated by a weighted sum based on parameter $\alpha$. For instance, party $B$ obtains $0.6 \times 50\% + 0.4 \times 42.1875\% = 46.875\%$ of seats under NVT. 
\end{example}

Note that party $A$ is under-represented in SSDs compared to its national vote share, so it benefits from a smaller $\alpha$ in all cases since list votes somewhat adjusts this disproportion.
Formulas DVT, PVT and NVT gradually approximate proportionality.
However, their 'complexity' also rises: in DVT, list votes are not influenced by the distribution of votes among the constituencies; in PVT, list votes depend on the number of losing votes, that is, which party has won; and in NVT, list votes are affected by the number of votes of the winner and the runner-up.

\section{Study of the model} \label{Sec3}

This section investigates the model above in the case of two parties.

\subsection{Analytical results} \label{Sec31}

Let the vote share of the \emph{majority} party be $x \in \left( 0.5; \, 1 \right)$.
Since $x$ is fixed, the election outcome is deterministic if $\alpha = 0$ (only list votes count). However, voters of the party may be distributed arbitrarily among the districts, affecting the number of SSDs won.
Then seat allocation can be derived analytically in two extreme cases.

According to the first scenario, the majority party wins all districts, which is possible as it has more than 50\% of votes at national level. Then list votes are as follows:
\begin{center}
	\begin{tabularx}{\textwidth}{lCC} \toprule
          					& Majority party 	& Other party \\ \midrule
    Direct list votes 		& $x$     			& $1-x$ \\
    Losing list votes 		& $0$     			& $1-x$ \\
    Wasted list votes 		& $x-(1-x)=2x-1$ 	& $0$ \\ \midrule
    List vote share (DVT)	& $x$      			& $1-x$ \\
    List vote share (PVT) 	& $x / (2-x)$    	& $(2-2x) / (2-x)$ \\
    List vote share (NVT) 	& $(3x-1) / (1+x)$  & $(2-2x) / (1+x)$ \\ \bottomrule
	\end{tabularx}
\end{center}

It implies the following result.

\begin{proposition} \label{Prop1}
Consider a two-party system where the majority party wins all districts.
The majority party's preference order on vote transfer formulas is DVT $\succ$ NVT $\succ$ PVT.
\end{proposition}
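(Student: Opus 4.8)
The plan is to reduce the comparison of the three formulas to a comparison of the majority party's list vote shares, and then to settle the latter by elementary algebra. Since the majority party wins every district, its mandate share in the single-seat tier equals $1$ regardless of the transfer formula, so (using the aggregation rule of the model) its overall seat share is $\alpha \cdot 1 + (1-\alpha)\,s$, where $s$ is the list vote share read off from the table preceding the proposition: $s_{\mathrm{DVT}} = x$, $s_{\mathrm{PVT}} = x/(2-x)$, and $s_{\mathrm{NVT}} = (3x-1)/(1+x)$. Because $\alpha \in [0,1]$ is the same constant in all three cases and $s \mapsto \alpha + (1-\alpha)s$ is increasing (strictly so when $\alpha<1$), the party — which prefers the formula giving it the larger seat share — ranks the formulas exactly as it ranks the numbers $s_{\mathrm{DVT}}$, $s_{\mathrm{NVT}}$, $s_{\mathrm{PVT}}$ on the interval $x \in (0.5,1)$.

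First I would verify $s_{\mathrm{DVT}} > s_{\mathrm{NVT}}$. As $1+x>0$ on this interval, the inequality is equivalent to $x(1+x) > 3x-1$, i.e. to $x^2 - 2x + 1 = (1-x)^2 > 0$, which holds for every $x \in (0.5,1)$. Next I would verify $s_{\mathrm{NVT}} > s_{\mathrm{PVT}}$: since $1+x>0$ and $2-x>0$ on $(0.5,1)$, cross-multiplication preserves the inequality and it becomes $(3x-1)(2-x) > x(1+x)$, which simplifies to $-4x^2 + 6x - 2 > 0$, i.e. to $2x^2 - 3x + 1 = (2x-1)(x-1) < 0$; for $x \in (0.5,1)$ the first factor is positive and the second negative, so the product is negative, as required. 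Chaining the two inequalities yields $s_{\mathrm{DVT}} > s_{\mathrm{NVT}} > s_{\mathrm{PVT}}$, hence the preference order DVT $\succ$ NVT $\succ$ PVT.

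The hard part will essentially be bookkeeping rather than mathematics: the only points deserving care are checking that the denominators $2-x$ and $1+x$ are strictly positive on $(0.5,1)$ before cross-multiplying, so that the direction of each inequality is kept, and noting the boundary case $\alpha = 1$, in which the list votes carry no weight and the three formulas give the same seat share, so the strict preference is to be understood for $\alpha < 1$. As a consistency check I would also confirm that, in each row of the table, the two parties' list vote shares sum to $1$; this both validates the formulas used and makes transparent why the losing party's preference is the exact reverse of the one just established.
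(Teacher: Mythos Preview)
Your proof is correct and follows the same approach as the paper: reduce to comparing the three list vote shares and settle the comparisons by elementary algebra, with DVT $\succ$ NVT coming from $(1-x)^2>0$ in both treatments. The only cosmetic difference is in the NVT $\succ$ PVT step, where you cross-multiply and factor $(2x-1)(x-1)<0$, whereas the paper rewrites $(3x-1)/(1+x)$ as $\bigl(x+(2x-1)\bigr)/\bigl((2-x)+(2x-1)\bigr)$ and uses that adding the same positive quantity $2x-1$ to the numerator and denominator of a fraction below $1$ increases it.
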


\begin{proof}
Note that
\[
x > \frac{x}{1 + (1-x)} = \frac{x}{2-x} \quad \text{and} \quad \frac{3x-1}{1+x} = \frac{x + (2x-1)}{(2-x) + (2x-1)} > \frac{x}{2-x},
\]
so DVT and NVT are more favourable for the majority party than PVT. Furthermore,
\[
x > \frac{3x-1}{1+x} \Leftrightarrow (x-1)^2 > 0,
\]
hence the majority party has the preference order DVT $\succ$ NVT $\succ$ PVT.
\end{proof}

The second extremity is when the majority party (marginally) loses as many districts as possible, that is, the ratio of constituencies lost equals to $2(1-x)$.\footnote{~For the sake of simplicity, it is assumed that $2(1-x)$ times the number of SSDs is an integer.} Then list votes are as follows:
\begin{center}
	\begin{tabularx}{\textwidth}{lCC} \toprule
          					& Majority party 	& Other party \\ \midrule
    Direct list votes 		& $x$     			& $1-x$ \\
    Losing list votes 		& $1-x$     		& $0$ \\
    Wasted list votes 		& $x-(1-x)=2x-1$ 	& $0$ \\ \midrule
    List vote share (DVT) 	& $x$      			& $1-x$ \\
    List vote share (PVT) 	& $1 / (2-x)$   	& $(1-x) / (2-x)$ \\
    List vote share (NVT) 	& $2x / (1+x)$  	& $(1-x) / (1+x)$ \\ \bottomrule
	\end{tabularx}
\end{center}

It implies the following result, analogous to Proposition~\ref{Prop1}.

\begin{proposition} \label{Prop2}
Consider a two-party system where the majority party loses as many districts as possible.
The majority party's preference order on vote transfer formulas is NVT $\succ$ PVT $\succ$ DVT.
\end{proposition}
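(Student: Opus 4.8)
The plan is to follow the template of the proof of Proposition~\ref{Prop1}. The key preliminary observation is that the share of single-seat-district mandates won by the majority party is the same under all three transfer formulas: once we fix that the party marginally loses a fraction $2(1-x)$ of the constituencies, the district outcomes --- and hence the SSD component $\alpha\cdot(\text{SSD share})$ of each party's seat share --- are determined, and they do not depend on how list votes are subsequently computed. Therefore each party's seat share is one and the same increasing affine function of its list vote share across DVT, PVT and NVT, so the majority party's preference order on the three formulas coincides with the order of its list vote shares, and it suffices to prove
\[
x \;<\; \frac{1}{2-x} \;<\; \frac{2x}{1+x} \qquad \text{for all } x \in (0.5,\,1),
\]
with the order strict whenever $\alpha<1$ and all three coinciding at $\alpha=1$.

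For the chain of inequalities I would reuse the mediant argument from Proposition~\ref{Prop1}, writing each list vote share as the ratio of that party's list votes to the total list votes. Writing $\frac{1}{2-x} = \frac{x+(1-x)}{1+(1-x)}$ displays the PVT share as the mediant of the DVT share $\frac{x}{1}$ and of $\frac{1-x}{1-x}=1$; since $x<1$ these two fractions are unequal, so the mediant lies strictly between them, giving $x < \frac{1}{2-x}$. Similarly $\frac{2x}{1+x} = \frac{1+(2x-1)}{(2-x)+(2x-1)}$ displays the NVT share as the mediant of the PVT share $\frac{1}{2-x}$ and of $\frac{2x-1}{2x-1}=1$, and since $\frac{1}{2-x}<1$ for $x<1$ this yields $\frac{1}{2-x} < \frac{2x}{1+x}$. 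Each inequality can equally be checked by clearing denominators --- $\frac{1}{2-x} > x \Leftrightarrow (x-1)^2 > 0$ and $\frac{2x}{1+x} > \frac{1}{2-x} \Leftrightarrow (2x-1)(1-x) > 0$ --- both manifestly true on $(0.5,1)$; this is the style used at the end of the proof of Proposition~\ref{Prop1}.

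I do not expect a genuine obstacle here. The only points that require care are the sign bookkeeping when clearing denominators --- note $2-x>0$ and $1+x>0$ throughout $(0.5,1)$, so no inequality reverses --- and making the reduction in the first paragraph explicit, i.e.\ that comparing seat shares reduces to comparing list vote shares precisely because the SSD term is common to the three formulas. It is worth flagging in the write-up the structural reason the order here is the reverse of Proposition~\ref{Prop1}: when the majority party sweeps every district, its entire surplus $2x-1$ of wasted votes is spread against a large denominator so NVT gains little and DVT is best; here the party loses its districts only narrowly and wins the rest by a landslide (the opposing votes being exhausted in the lost districts), so the same surplus $2x-1$ now pushes the NVT share up the most, inverting the ranking.
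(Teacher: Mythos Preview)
Your proposal is correct and follows essentially the same route as the paper: the paper's proof is the one-line chain
\[
x < \frac{x + (1-x)}{1 + (1-x)} = \frac{1}{2-x} < \frac{1 + (2x-1)}{(2-x) + (2x-1)} = \frac{2x}{1 + x},
\]
which is exactly your mediant argument in compressed form. Your version is more explicit about why comparing seat shares reduces to comparing list vote shares, and your alternative checks by clearing denominators and the closing intuition about why the ranking reverses are sound additions, but the underlying idea is identical.
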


\begin{proof}
Note that
\[
x < \frac{x + (1-x)}{1 + (1-x)} = \frac{1}{2-x} < \frac{1 + (2x-1)}{2-x + (2x-1)} = \frac{2x}{1 + x},
\]
so the majority party has the preference order NVT $\succ$ PVT $\succ$ DVT.
\end{proof}

Naturally, it has a low probability that any of these extremities occur in practice.
Nevertheless, the analysis yields at least two lessons for the majority party, which may be generalized: DVT results in the greatest variance of seat share, and NVT may be better than PVT.

Finally, a simple stochastic case is analysed, first for the expected vote share of the majority party.

\begin{proposition} \label{Prop3}
Consider a two-party system where the vote share of the majority party in each SSD has a continuous uniform distribution over the interval from $k-h$ to $k+h$ (the expected value is $k$) and $0.5 < k < 0.5+h$.
The majority party's preference order on vote transfer formulas with respect to its expected seat share is DVT $\succ$ NVT $\succ$ PVT.
\end{proposition}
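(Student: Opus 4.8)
The plan is to reduce the comparison of expected seat shares to a comparison of expected list vote shares, and then to settle each of the three pairwise comparisons with a single symmetry argument.

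\textbf{Step 1 (reduction).} Write the seat share under a transfer formula $F\in\{\text{DVT},\text{PVT},\text{NVT}\}$ as $\alpha\cdot(\text{SSD mandate share})+(1-\alpha)\cdot\sigma^{F}$, where $\sigma^{F}$ is the majority party's list vote share. Which party wins a district, and hence the SSD mandate share, is the same under all three formulas, so in expectation that term is formula-independent. Thus for $\alpha<1$ the ordering of expected seat shares coincides with the ordering of the $\sigma^{F}$'s in expectation (for $\alpha=1$ there are no list seats and the three formulas are tied). It therefore suffices to order $\mathbb{E}[\sigma^{\mathrm{DVT}}(s)]$, $\mathbb{E}[\sigma^{\mathrm{NVT}}(s)]$, $\mathbb{E}[\sigma^{\mathrm{PVT}}(s)]$, where $s\sim\mathrm{U}[k-h,k+h]$ is the majority party's vote share in a district and $\sigma^{F}(s)$ is the list vote share it produces.

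\textbf{Step 2 (the share functions).} Reading off the district tallies exactly as in Example~\ref{Examp1} and the tables before Propositions~\ref{Prop1}--\ref{Prop2} (with $x$ replaced by $s$) gives $\sigma^{\mathrm{DVT}}(s)=s$ and, split at $s=\tfrac12$,
\[
\sigma^{\mathrm{PVT}}(s)=\tfrac{2s}{1+s}\ (s\le\tfrac12),\ \tfrac{s}{2-s}\ (s\ge\tfrac12);\qquad \sigma^{\mathrm{NVT}}(s)=\tfrac{2s}{2-s}\ (s\le\tfrac12),\ \tfrac{3s-1}{1+s}\ (s\ge\tfrac12).
\]
For each of the three pairs, the difference of the two functions is of one sign on $(k-h,\tfrac12)$ and of the opposite sign on $(\tfrac12,k+h)$; a short computation (essentially completing a square) gives, for instance, $\sigma^{\mathrm{DVT}}(s)-\sigma^{\mathrm{NVT}}(s)=-\frac{s^{2}}{2-s}$ on $s<\tfrac12$ and $=\frac{(1-s)^{2}}{1+s}$ on $s>\tfrac12$, while $\sigma^{\mathrm{DVT}}(s)-\sigma^{\mathrm{PVT}}(s)$ switches between $-\frac{s(1-s)}{1+s}$ and $\frac{s(1-s)}{2-s}$, and $\sigma^{\mathrm{NVT}}(s)-\sigma^{\mathrm{PVT}}(s)$ switches between $-\frac{2s(1-2s)}{(2-s)(1+s)}$ and $\frac{2(2s-1)(1-s)}{(2-s)(1+s)}$.

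\textbf{Step 3 (the symmetry trick).} Integrate such a difference over $[k-h,k+h]$ against the density $\tfrac{1}{2h}$ and, in the part over the ``majority-wins'' interval $[\tfrac12,k+h]$, substitute $t=1-s$. In each of the three cases this carries the winning-branch integrand onto exactly the losing-branch integrand: $\tfrac{(1-s)^{2}}{1+s}\mapsto\tfrac{t^{2}}{2-t}$, $\tfrac{s(1-s)}{2-s}\mapsto\tfrac{t(1-t)}{1+t}$, and $\tfrac{2(2s-1)(1-s)}{(2-s)(1+s)}\mapsto\tfrac{2t(1-2t)}{(2-t)(1+t)}$. Hence, writing $g$ for the resulting common integrand (which is $\ge 0$ on $[0,\tfrac12]$ and $>0$ in the interior), every comparison reduces to the single inequality
\[
\int_{1-k-h}^{1/2} g(u)\,du \;>\; \int_{k-h}^{1/2} g(u)\,du .
\]
Since $k>\tfrac12$ one has $1-k-h<k-h$, and since the support of $s$ lies in $[0,1]$ one has $1-k-h\ge 0$ (while $k-h<\tfrac12$ is precisely the hypothesis $k<\tfrac12+h$); so the left-hand integral runs over a strictly larger subinterval of $[0,\tfrac12]$. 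This gives the strict inequality, and tracking the signs yields $\mathbb{E}[\sigma^{\mathrm{DVT}}(s)]>\mathbb{E}[\sigma^{\mathrm{NVT}}(s)]>\mathbb{E}[\sigma^{\mathrm{PVT}}(s)]$, i.e. DVT $\succ$ NVT $\succ$ PVT.

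\textbf{Main obstacle.} The only non-mechanical point is the verification in Step~3 that $t=1-s$ matches the winning- and losing-branch integrands in all three pairwise comparisons; this is what makes the otherwise asymmetric integrals directly comparable (and, incidentally, it is why no explicit evaluation of the logarithms arising from $\int\frac{2s}{1+s}\,ds$ etc. is needed). The piecewise formulas of Step~2, the sign patterns, and the degenerate cases ($\alpha=1$, and $k+h=1$ where the left endpoint is a zero of $g$) are routine.
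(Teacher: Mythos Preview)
Your Steps 2 and 3 are clean, and the substitution $t=1-s$ does exactly what you claim. The difficulty is earlier, in Step 1.

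The list vote share is not formed district by district and then averaged. In the model the list votes are first \emph{aggregated across all SSDs} and only then divided. Under PVT, for instance, the majority's list votes are $\sum_i s_i+\sum_{i:\,s_i<1/2}s_i$ and the pool is the corresponding aggregate; the share is the ratio of these two sums. Consequently the quantity the paper calls the expected list vote share is
\[
\ell_F \;=\; \frac{\mathbb{E}[f_F(s)]}{\mathbb{E}[g_F(s)]},
\]
the ratio of the expected per-district numerator and denominator (this is exactly what the paper writes down with $\pi,\beta,\gamma,\Delta,\nabla$, and it is also the law-of-large-numbers limit for many i.i.d.\ districts). What you compute is $\mathbb{E}[f_F(s)/g_F(s)]$, the expectation of the single-district ratio; your formulas in Step 2 are precisely $f_F(s)/g_F(s)$. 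The sentence at the end of Step 1 that passes from ``order the $\sigma^F$'s in expectation'' to ``order $\mathbb{E}[\sigma^F(s)]$ with $s$ the share in one district'' is therefore an unjustified leap: $\mathbb{E}[A/B]\neq \mathbb{E}[A]/\mathbb{E}[B]$ when $B$ is non-constant, and here the per-district denominator does vary with $s$ (it equals $1+s$ or $2-s$ under PVT, and $2-s$ or $1+s$ under NVT). For DVT the denominator is constant, which is why your $\ell_{\mathrm{DVT}}=k$ happens to match; for PVT and NVT the two numbers differ (e.g.\ at $k=0.51$, $h=0.1$ the paper's $\ell_{\mathrm{PVT}}\approx 0.4931$ while your integral gives $\approx 0.4928$).

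So Steps 2--3 rigorously establish DVT $\succ$ NVT $\succ$ PVT for the wrong functional, and the argument as written does not prove the proposition. The symmetry idea can likely be salvaged---pair contributions at $s$ and $1-s$ in $\mathbb{E}[f_F]$ and $\mathbb{E}[g_F]$ separately, then compare the resulting fractions---but that is a different computation and has to be carried out with the correct target $\mathbb{E}[f_F]/\mathbb{E}[g_F]$. (For what it is worth, the paper itself only writes down $\ell_{\mathrm{DVT}},\ell_{\mathrm{PVT}},\ell_{\mathrm{NVT}}$ explicitly and then asserts the inequality with reference to Figure~\ref{Fig1}, so a clean analytic verification would be a genuine addition.)
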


\begin{proof}
Based on the properties of uniform distribution, the following table can be filled out for the majority party:
\begin{center}
	\begin{tabularx}{0.8\textwidth}{lC} \toprule
Probability of winning an SSD:				& $\pi = \dfrac{1}{2} + \dfrac{k-0.5}{2h}$	\\ [0.25cm]
Probability of losing an SSD:				& $1-\pi = \dfrac{1}{2} - \dfrac{k-0.5}{2h}$\\ \midrule
Expected vote share in an SSD won:			& $\beta = \dfrac{0.5+k+h}{2}$				\\ [0.25cm]
Expected vote share in an SSD lost:			& $\gamma = \dfrac{0.5+k-h}{2}$				\\ \midrule
Expected vote difference in an SSD won:		& $\Delta = k+h-0.5$						\\
Expected vote difference in an SSD lost:	& $\nabla = 0.5-k+h$						\\ \bottomrule
	\end{tabularx}
\end{center}
Denote by $\ell_i$ the expected share of the majority party from list votes under vote transfer formula $i \in \{ DVT; PVT; NVT\}$.
Then the majority party's expected seat share is $s_i = \alpha \pi + (1-\alpha) \ell_i$. For a given $k$ and $h$, only $\ell_i$ depends on the vote transfer formula applied in the following way:
\begin{eqnarray*}
\ell_{DVT} & = & k \\ 
\ell_{PVT} & = & \frac{k + (1-\pi) \gamma}{1 + (1-\pi) \gamma + \pi (1-\beta)} \\
\ell_{NVT} & = & \frac{k + (1-\pi) \gamma + \pi \Delta}{1 + (1-\pi) \gamma + \pi (1-\beta) + \pi \Delta + (1-\pi) \nabla}.
\end{eqnarray*}
Despite the complicated expressions, it can be seen that $\ell_{PVT} < \ell_{NVT} < \ell_{DVT}$, illustrated by Figure~\ref{Fig1}.
\end{proof}

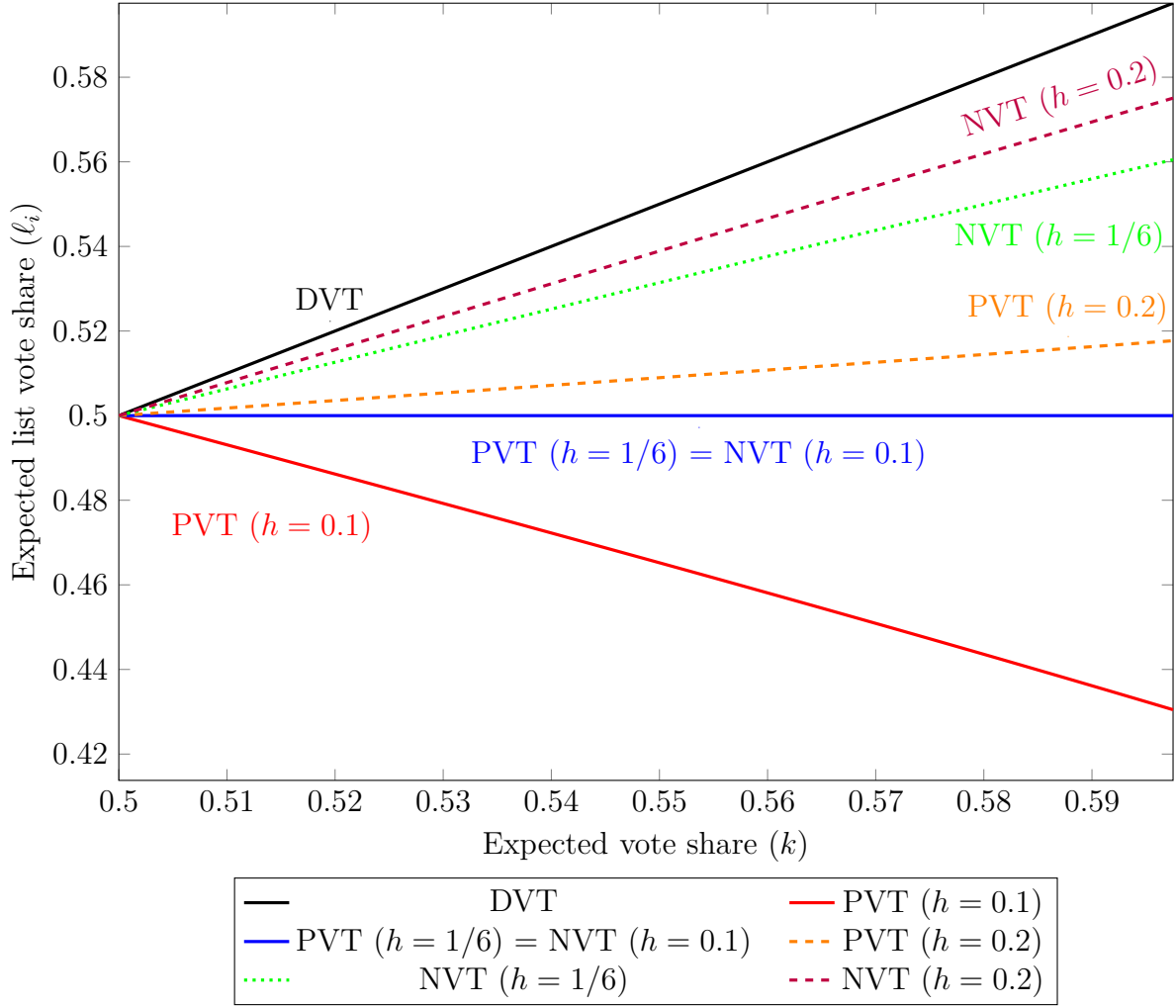
\begin{figure}[tbp]
	\centering
	\caption{Comparison of vote transfer formulas for expected list vote share}
	\label{Fig1}
	
\begin{tikzpicture}
\begin{axis}[width=0.975\textwidth, 
height=0.75\textwidth,
xmin=0.5,
xmax=0.5975,
ymax=0.5975,
xlabel = Expected vote share ($k$),
ylabel = Expected list vote share ($\ell_i$),
legend entries={DVT$\quad$,PVT ($h=0.1$),PVT ($h=1/6$) = NVT ($h=0.1$)$\quad$,PVT ($h=0.2$),NVT ($h=1/6$)$\quad$,NVT ($h=0.2$)},
legend style={at={(0.5,-0.125)},anchor = north,legend columns = 2}
]

\addplot[black,smooth,very thick] coordinates { 
(0.5,0.5)
(0.5025,0.5025)
(0.505,0.505)
(0.5075,0.5075)
(0.51,0.51)
(0.5125,0.5125)
(0.515,0.515)
(0.5175,0.5175)
(0.52,0.52)
(0.5225,0.5225)
(0.525,0.525)
(0.5275,0.5275)
(0.53,0.53)
(0.5325,0.5325)
(0.535,0.535)
(0.5375,0.5375)
(0.54,0.54)
(0.5425,0.5425)
(0.545,0.545)
(0.5475,0.5475)
(0.55,0.55)
(0.5525,0.5525)
(0.555,0.555)
(0.5575,0.5575)
(0.56,0.56)
(0.5625,0.5625)
(0.565,0.565)
(0.5675,0.5675)
(0.57,0.57)
(0.5725,0.5725)
(0.575,0.575)
(0.5775,0.5775)
(0.58,0.58)
(0.5825,0.5825)
(0.585,0.585)
(0.5875,0.5875)
(0.59,0.59)
(0.5925,0.5925)
(0.595,0.595)
(0.5975,0.5975)
}
node [pos=0.2,pin={[pin edge={black}, pin distance=0cm] 90:{DVT}}] {};

\addplot[red,smooth,very thick] coordinates { 
(0.5,0.5)
(0.5025,0.49827582491002)
(0.505,0.496551426847142)
(0.5075,0.494826582742342)
(0.51,0.493101069334253)
(0.5125,0.491374663072776)
(0.515,0.489647140022431)
(0.5175,0.487918275765356)
(0.52,0.486187845303868)
(0.5225,0.484455622962499)
(0.525,0.482721382289417)
(0.5275,0.48098489595713)
(0.53,0.479245935662401)
(0.5325,0.477504272025265)
(0.535,0.475759674487058)
(0.5375,0.474011911207364)
(0.54,0.472260748959779)
(0.5425,0.470505953026393)
(0.545,0.468747287090894)
(0.5475,0.466984513130173)
(0.55,0.465217391304349)
(0.5525,0.46344567984508)
(0.555,0.461669134942069)
(0.5575,0.459887510627631)
(0.56,0.458100558659219)
(0.5625,0.456308028399783)
(0.565,0.454509666695828)
(0.5675,0.452705217753061)
(0.57,0.450894423009472)
(0.5725,0.44907702100573)
(0.575,0.447252747252748)
(0.5775,0.445421334096262)
(0.58,0.443582510578281)
(0.5825,0.44173600229525)
(0.585,0.439881531252764)
(0.5875,0.438018815716659)
(0.59,0.436147570060307)
(0.5925,0.434267504607938)
(0.595,0.432378325473798)
(0.5975,0.430479734396936)
}
node [pos=0.25,pin={[pin edge={white}, pin distance=0cm] 265:{PVT ($h=0.1$)}}] {};

\addplot[blue,smooth,very thick] coordinates { 
(0.5,0.5)
(0.5025,0.5)
(0.505,0.5)
(0.5075,0.5)
(0.51,0.5)
(0.5125,0.5)
(0.515,0.5)
(0.5175,0.5)
(0.52,0.5)
(0.5225,0.5)
(0.525,0.5)
(0.5275,0.5)
(0.53,0.5)
(0.5325,0.5)
(0.535,0.5)
(0.5375,0.5)
(0.54,0.5)
(0.5425,0.5)
(0.545,0.5)
(0.5475,0.5)
(0.55,0.5)
(0.5525,0.5)
(0.555,0.5)
(0.5575,0.5)
(0.56,0.5)
(0.5625,0.5)
(0.565,0.5)
(0.5675,0.5)
(0.57,0.5)
(0.5725,0.5)
(0.575,0.5)
(0.5775,0.5)
(0.58,0.5)
(0.5825,0.5)
(0.585,0.5)
(0.5875,0.5)
(0.59,0.5)
(0.5925,0.5)
(0.595,0.5)
(0.5975,0.5)
}
node [pos=0.55,pin={[pin edge={blue}, pin distance=0cm] 270:{PVT ($h=1/6$) = NVT ($h=0.1$)}}] {};

\addplot[orange,dashed,smooth,very thick] coordinates { 
(0.5,0.5)
(0.5025,0.500446433553946)
(0.505,0.500892897004331)
(0.5075,0.501339420254267)
(0.51,0.501786033220218)
(0.5125,0.502232765838683)
(0.515,0.502679648072886)
(0.5175,0.503126709919487)
(0.52,0.503573981415297)
(0.5225,0.50402149264402)
(0.525,0.504469273743017)
(0.5275,0.50491735491009)
(0.53,0.505365766410302)
(0.5325,0.505814538582818)
(0.535,0.506263701847792)
(0.5375,0.506713286713287)
(0.54,0.507163323782235)
(0.5425,0.507613843759447)
(0.545,0.508064877458667)
(0.5475,0.508516455809679)
(0.55,0.508968609865471)
(0.5525,0.509421370809453)
(0.555,0.509874769962745)
(0.5575,0.510328838791526)
(0.56,0.51078360891445)
(0.5625,0.511239112110143)
(0.565,0.511695380324772)
(0.5675,0.512152445679693)
(0.57,0.512610340479193)
(0.5725,0.51306909721831)
(0.575,0.513528748590755)
(0.5775,0.513989327496925)
(0.58,0.514450867052023)
(0.5825,0.514913400594276)
(0.585,0.515376961693274)
(0.5875,0.515841584158416)
(0.59,0.516307302047472)
(0.5925,0.516774149675284)
(0.595,0.517242161622578)
(0.5975,0.517711372744922)
}
node [pos=0.9,pin={[pin edge={orange,solid}, pin distance=0cm] 90:{PVT ($h=0.2$)}}] {};

\addplot[green,dotted,smooth,very thick] coordinates { 
(0.5,0.5)
(0.5025,0.501578928670582)
(0.505,0.50315774515944)
(0.5075,0.504736337311418)
(0.51,0.50631459302448)
(0.5125,0.507892400276234)
(0.515,0.509469647150425)
(0.5175,0.511046221863366)
(0.52,0.512622012790306)
(0.5225,0.514196908491723)
(0.525,0.515770797739518)
(0.5275,0.517343569543115)
(0.53,0.518915113175427)
(0.5325,0.52048531819871)
(0.535,0.522054074490261)
(0.5375,0.52362127226797)
(0.54,0.525186802115691)
(0.5425,0.526750555008451)
(0.545,0.528312422337452)
(0.5475,0.529872295934878)
(0.55,0.53143006809848)
(0.5525,0.532985631615942)
(0.555,0.534538879788998)
(0.5575,0.536089706457311)
(0.56,0.53763800602208)
(0.5625,0.539183673469387)
(0.565,0.540726604393251)
(0.5675,0.5422666950184)
(0.57,0.543803842222731)
(0.5725,0.54533794355947)
(0.575,0.546868897278999)
(0.5775,0.548396602350356)
(0.58,0.549920958482402)
(0.5825,0.551441866144626)
(0.585,0.552959226587607)
(0.5875,0.554472941863103)
(0.59,0.555982914843765)
(0.5925,0.557489049242477)
(0.595,0.558991249631303)
(0.5975,0.560489421460043)
}
node [pos=0.785,pin={[pin edge={white,solid}, pin distance=-0.25cm] 300:{NVT ($h=1/6$)}}] {};

\addplot[purple,dashed,smooth,very thick] coordinates { 
(0.5,0.5)
(0.5025,0.5019531059267)
(0.505,0.50390609741807)
(0.5075,0.505858860061127)
(0.51,0.50781127948758)
(0.5125,0.509763241396143)
(0.515,0.511714631574837)
(0.5175,0.51366533592324)
(0.52,0.515615240474703)
(0.5225,0.517564231418507)
(0.525,0.51951219512195)
(0.5275,0.521459018152378)
(0.53,0.52340458729911)
(0.5325,0.525348789595296)
(0.535,0.527291512339662)
(0.5375,0.529232643118148)
(0.54,0.531172069825436)
(0.5425,0.533109680686343)
(0.545,0.535045364277091)
(0.5475,0.536979009546423)
(0.55,0.538910505836575)
(0.5525,0.540839742904094)
(0.555,0.542766610940476)
(0.5575,0.544691000592641)
(0.56,0.546612802983219)
(0.5625,0.548531909730647)
(0.565,0.550448212969071)
(0.5675,0.552361605368033)
(0.57,0.554271980151961)
(0.5725,0.556179231119418)
(0.575,0.558083252662147)
(0.5775,0.559983939783863)
(0.58,0.56188118811881)
(0.5825,0.56377489395008)
(0.585,0.565664954227663)
(0.5875,0.567551266586247)
(0.59,0.569433729362751)
(0.5925,0.571312241613583)
(0.595,0.573186703131619)
(0.5975,0.575057014462908)
}
node [pos=0.9,pin={[pin edge={white,solid}, pin distance=-0.1cm, rotate=17.5] 90:{NVT ($h=0.2$)}}] {};
\end{axis}
\end{tikzpicture}
\end{figure}


Uniform distribution can be regarded as a standard case, when no party has an excessive power in any district (or the ratio of party strongholds is approximately proportional to their vote share).
Then Proposition~\ref{Prop3} shows that lack of compensation (DVT) is the most favourable for the majority party. It remains to be seen whether further considerations may modify this conclusion.

\subsection{Simulation results} \label{Sec32}

Sometimes the expected vote share has a modest importance provided that the election is won. Therefore, 50\% can be a natural threshold: a party may be interested in the expected chance of winning the majority of seats.

It is not attempted to derive the exact probability of this even under the restrictive assumption of uniformly distributed votes in SSDs. Therefore, simulations were carried out to get more insight into the effects of vote transfer formulas.

\begin{assumption}
The main assumptions of the calculations are as follows:
\begin{itemize}[label=$\bullet$]
\item
There are $100$ single-seat districts and $m$ mandates allocated on party lists.
\item
The vote share of the majority party in each SSD has a continuous uniform distribution over the interval from $k-h$ to $k+h$ (implying that the expected value is $k$).
\end{itemize}
\end{assumption}

In the simulation, 1 million runs were carried out for every value of $h \in \{ 0.1; 0.2 \}$ and $k$ such that $100k$ is an integer and $0.5 < k < 0.5 + h$. Election outcomes were analysed in the case $m \in \{20; 25; 30; \dots ; 100 \}$ (that is, $\alpha \in \{ 0.8333; 0.8; 0.7692; \dots ; 0.5 \}$).
We have focused on the average seat share of the majority party, and the number of runs when the party obtains a majority (defined as more than 50\% of the seats).

\begin{table}[!ht]
  \centering
  \caption{Simulation results: $\alpha = 0.5$ ($m=100$); $h=0.1$}
  \label{Table1}
    \begin{tabularx}{\textwidth}{l CCCCC} \toprule
    Value of $k$ (expected value) & 0.51  & 0.52  & 0.53  & 0.54  & 0.55 \\ \midrule
    Average vote share & 0.51000 & 0.51999 & 0.53001 & 0.53999 & 0.55000 \\
    Average seat share (DVT) & 0.52999 & 0.55997 & 0.59007 & 0.61994 & 0.65001 \\
    Average seat share (PVT) & 0.52154 & 0.54307 & 0.56467 & 0.58609 & 0.60761 \\
    Average seat share (NVT) & 0.52499 & 0.54998 & 0.57506 & 0.59996 & 0.62501 \\
    Majority: all rules & 864107 & 985589 & 999636 & 999996 & 1000000 \\
    Majority: PVT and NVT & 7516  & 3359  & 135   & 1     & 0 \\
    Majority: DVT and NVT & 0     & 0     & 0     & 0     & 0 \\
    Majority: DVT and PVT & 9     & 0     & 0     & 0     & 0 \\
    Majority: only DVT & 1     & 0     & 0     & 0     & 0 \\
    Majority: only PVT & 0     & 0     & 0     & 0     & 0 \\
    Majority: only NVT & 10928 & 2661  & 103   & 1     & 0 \\
    Minority & 117439 & 8391  & 126   & 2     & 0 \\ \bottomrule
    \end{tabularx}
\end{table}

The case $\alpha = 0.5$ and $h = 0.1$ is detailed in Table~\ref{Table1} for various values of $k$. The first row shows the expected, and the second row shows the average vote share of the majority party. They are approximately equal due to the great number of simulation runs.
In the subsequent three rows, one can see the average seat share according to the three vote transfer formulas, which can be cross-checked with the theoretical value derived in the proof of Proposition~\ref{Prop3}. Due to the large number of runs, they are also very close.

Finally, the last 8 rows highlight the number of cases with respect to winning a majority of seats. In most runs, the majority party wins the election under all vote transfer formulas, but -- due to the stochastic nature of votes -- it may happen that a large number of SSDs, therefore the whole election is lost. Nevertheless, sometimes list votes can compensate it.

There exists no run such that only PVT leads to a favourable election outcome. Furthermore, NVT almost dominates PVT, and PVT almost dominates DVT from this point of view. For example, in the case of $k=0.52$, $14898$ elections are lost under DVT, $11373$ are lost under PVT, and $8782$ are lost under NVT. It is a remarkable difference.

\begin{figure}[!ht]
\centering
\caption{Comparison of vote transfer formulas by simulation}
\label{Fig2}

\begin{subfigure}{0.48\textwidth}
	\centering
	\caption{$\alpha = 0.5$ ($m=100$); $h = 0.1$}
	\label{Fig2a} 
\begin{tikzpicture}
\begin{axis}[width=0.95\textwidth, 
height=0.75\textwidth,
xlabel={Expected vote share ($k$)},
axis y line*=left,
ymin=1,
ymax=50000,
ybar,
ymode=log,
bar width=3pt,
]

\addplot [color=red!50!white,fill] coordinates{ 
(0.51,7515)
(0.52,3359)
(0.53,135)
(0.54,1)
};

\addplot [color=blue!50!white,fill] coordinates{ 
(0.51,18434)
(0.52,6020)
(0.53,238)
(0.54,2)
};
\end{axis}

\begin{axis}[width=0.95\textwidth, 
height=0.75\textwidth,
axis x line*=none,
axis y line*=right,
hide x axis,
ymin=0.5,
ymax=0.66,
]
    
\addplot [color=green,thick] coordinates { 
(0.51,0.52999305694743)
(0.52,0.559969873827318)
(0.53,0.590073432306256)
(0.54,0.619944102916358)
};

\addplot [color=red,thick] coordinates { 
(0.51,0.521544178683437)
(0.52,0.543070315503332)
(0.53,0.564670543695272)
(0.54,0.586088922520421)
};

\addplot [color=blue,thick] coordinates { 
(0.51,0.5249937384947)
(0.52,0.549975282039047)
(0.53,0.5750557034348)
(0.54,0.599957151437912)
};
\end{axis}
\end{tikzpicture}

\end{subfigure}
\begin{subfigure}{0.48\textwidth}
	\centering
	\caption{$\alpha = 0.5$ ($m=100$); $h = 0.2$}
	\label{Fig2b} 
\begin{tikzpicture}
\begin{axis}[width=0.95\textwidth, 
height=0.75\textwidth,
xlabel={Expected vote share ($k$)},
axis y line*=left,
ymin=1,
ymax=50000,
ybar,
ymode=log,
bar width=3pt,
]

\addplot [color=red!50!white,fill] coordinates{ 
(0.51,12454)
(0.52,14686)
(0.53,8878)
(0.54,3186)
(0.55,810)
(0.56,122)
(0.57,17)
};

\addplot [color=blue!50!white,fill] coordinates{ 
(0.51,21257)
(0.52,24365)
(0.53,14199)
(0.54,4867)
(0.55,1128)
(0.56,164)
(0.57,19)
};
\end{axis}

\begin{axis}[width=0.95\textwidth, 
height=0.75\textwidth,
axis x line*=none,
axis y line*=right,
hide x axis,
ymin=0.5,
ymax=0.66,
]
    
\addplot [color=green,thick] coordinates { 
(0.51,0.517525311403929)
(0.52,0.535011147920264)
(0.53,0.552481208006245)
(0.54,0.569979566908149)
(0.55,0.587476959970842)
(0.56,0.60495082921185)
(0.57,0.622523162514183)
};

\addplot [color=red,thick] coordinates { 
(0.51,0.513412901904517)
(0.52,0.526795584630104)
(0.53,0.54017017789165)
(0.54,0.553570992891914)
(0.55,0.566972423344333)
(0.56,0.580360522000538)
(0.57,0.593827783022018)
};

\addplot [color=blue,thick] coordinates { 
(0.51,0.516426688001973)
(0.52,0.532810627590013)
(0.53,0.549176516508179)
(0.54,0.565561335688082)
(0.55,0.581925305947866)
(0.56,0.598248657941766)
(0.57,0.614637714291493)
};
\end{axis}
\end{tikzpicture}
\end{subfigure}

\begin{subfigure}{0.48\textwidth}
	\centering
	\caption{$\alpha = 0.625$ ($m=60$); $h = 0.1$}
	\label{Fig2c}    
\begin{tikzpicture}
\begin{axis}[width=0.95\textwidth, 
height=0.75\textwidth,
xlabel={Expected vote share ($k$)},
axis y line*=left,
ymin=1,
ymax=50000,
ybar,
ymode=log,
bar width=3pt,
]

\addplot [color=red!50!white,fill] coordinates{ 
(0.51,36)
(0.52,636)
(0.53,180)
(0.54,2)
};

\addplot [color=blue!50!white,fill] coordinates{ 
(0.51,2678)
(0.52,3919)
(0.53,329)
(0.54,2)
};
\end{axis}

\begin{axis}[width=0.95\textwidth, 
height=0.75\textwidth,
axis x line*=none,
axis y line*=right,
hide x axis,
ymin=0.5,
ymax=0.66,
]
    
\addplot [color=green,thick] coordinates { 
(0.51,0.534992147710582)
(0.52,0.569964565370461)
(0.53,0.60508838422973)
(0.54,0.639932182187298)
};

\addplot [color=red,thick] coordinates { 
(0.51,0.528655489012568)
(0.52,0.557289896627495)
(0.53,0.586036217771445)
(0.54,0.614540796890309)
};

\addplot [color=blue,thick] coordinates { 
(0.51,0.531242658871036)
(0.52,0.56246862152927)
(0.53,0.593825087576128)
(0.54,0.624941968578465)
};
\end{axis}
\end{tikzpicture}
\end{subfigure}
\begin{subfigure}{0.48\textwidth}
	\centering
	\caption{$\alpha = 0.625$ ($m=60$); $h = 0.2$}
	\label{Fig2d}    
\begin{tikzpicture}
\begin{axis}[width=0.95\textwidth, 
height=0.75\textwidth,
xlabel={Expected vote share ($k$)},
axis y line*=left,
ymin=1,
ymax=50000,
ybar,
ymode=log,
bar width=3pt,
]

\addplot [color=red!50!white,fill] coordinates{ 
(0.51,2910)
(0.52,6389)
(0.53,5483)
(0.54,2041)
(0.55,519)
(0.56,111)
(0.57,13)
};

\addplot [color=blue!50!white,fill] coordinates{ 
(0.51,9645)
(0.52,15545)
(0.53,11152)
(0.54,4163)
(0.55,1121)
(0.56,211)
(0.57,28)
};
\end{axis}

\begin{axis}[width=0.95\textwidth, 
height=0.75\textwidth,
axis x line*=none,
axis y line*=right,
hide x axis,
ymin=0.5,
ymax=0.66,
]
    
\addplot [color=green,thick] coordinates { 
(0.51,0.519359559697398)
(0.52,0.538787472520484)
(0.53,0.558127952374224)
(0.54,0.577510214817471)
(0.55,0.596885605722282)
(0.56,0.616250131118682)
(0.57,0.635659570750487)
};

\addplot [color=red,thick] coordinates { 
(0.51,0.516282917805282)
(0.52,0.532622098809668)
(0.53,0.548891383517129)
(0.54,0.565196325660331)
(0.55,0.581499658171611)
(0.56,0.597797530467797)
(0.57,0.614136949007727)
};

\addplot [color=blue,thick] coordinates { 
(0.51,0.518539411332868)
(0.52,0.537136233490014)
(0.53,0.555649624975631)
(0.54,0.574190284940906)
(0.55,0.592716427787528)
(0.56,0.611218447903139)
(0.57,0.629745403294677)
};
\end{axis}
\end{tikzpicture}
\end{subfigure}

\vspace{-0.5cm}
\begin{center}
\begin{tikzpicture}
        \begin{customlegend}[legend columns=4,legend entries={Average seat share (right scale):$\quad$,DVT$\quad$,PVT$\quad$,NVT}]
        \addlegendimage{empty legend}
        \addlegendimage{color=green,very thick}
        \addlegendimage{color=red,very thick}
        \addlegendimage{color=blue,very thick}
        \end{customlegend}
\end{tikzpicture}
\vspace{0.1cm}
\pgfplotsset{
legend image code/.code={%
\draw[fill] (0cm,-0.1cm) rectangle (0.125cm,0.25cm) (0.25cm,-0.1cm) rectangle (0.375cm,0.175cm);
},
}
\begin{tikzpicture}        
        \begin{customlegend}[legend columns=4,legend entries={{Net advantage over DVT (left log scale, max. $10^6$):$\quad$},PVT$\quad$,NVT}]
        \addlegendimage{empty legend}
        \addlegendimage{color=red!50!white}
        \addlegendimage{color=blue!50!white}   
        \end{customlegend}
\end{tikzpicture}
\end{center}

\end{figure}
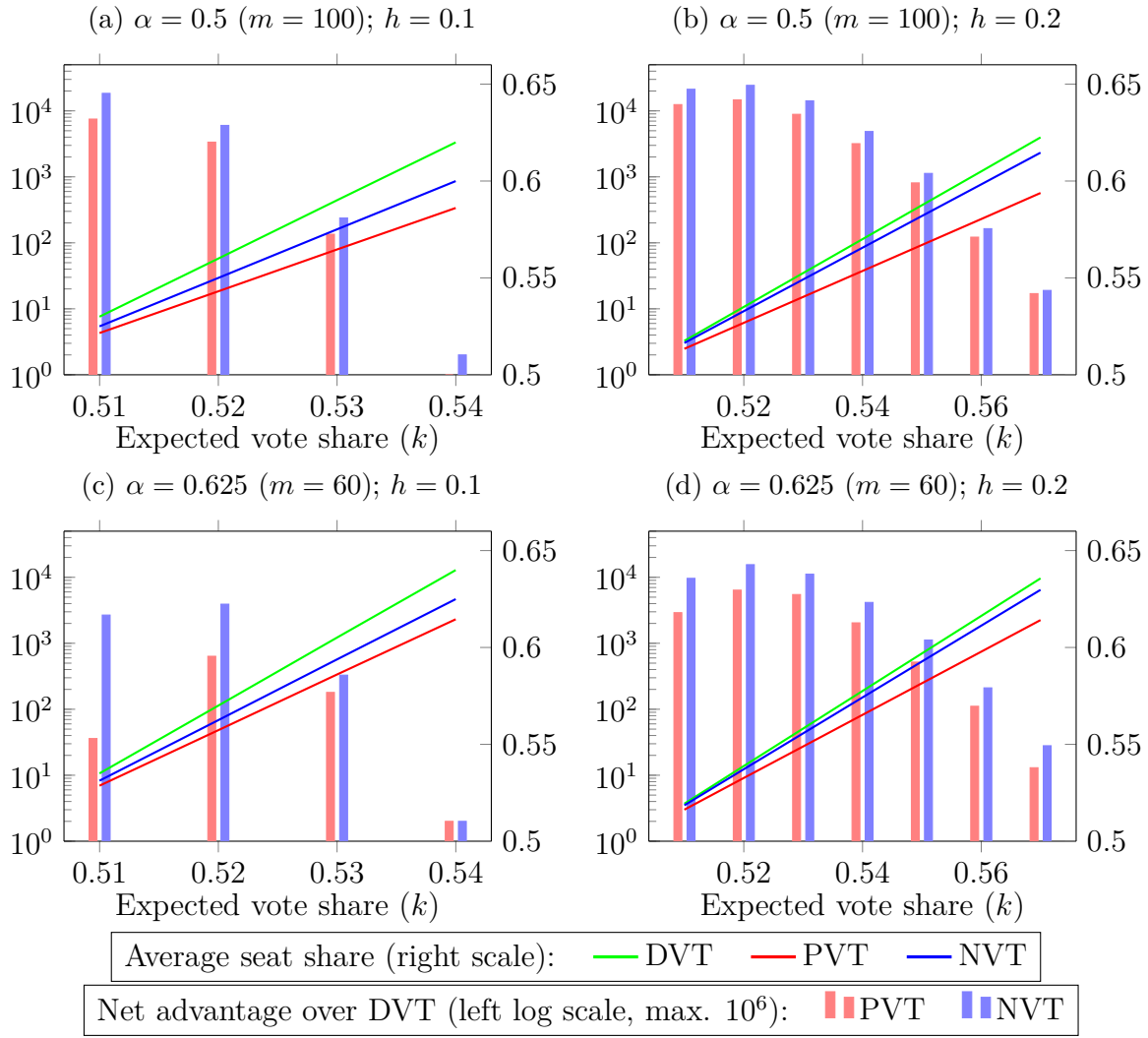


These conclusions are reinforced by further simulations for various values of $h$ and $m$, some of them presented in Figure~\ref{Fig2}. Here the lines (and the right scale) show the average seat share under the three vote transfer formulas, which can be derived from Proposition~\ref{Prop3}. On the other hand, the bars (and the left scale) measure the net advantage over DVT: the number of runs when PVT and NVT results in a majority, but DVT gets a minority, minus the number of runs when DVT results in a majority, but PVT and NVT gets a minority, respectively. For example, in the case of $\alpha=0.5$, $h=0.1$ and $k=0.51$, the net advantage of PVT over DVT is $7515 = 7516 - 1$, while the net advantage of NVT over DVT is $18435 = 7516+10928-17-1$ on the basis of Table~\ref{Table1}. Note that a small difference between the bars for PVT and NVT indicates a significant advantage of NVT due to the logarithmic scale.

On the basis of simulations, the comparison of vote transfer formulas -- discussed by Proposition~\ref{Prop3} from the viewpoint of expected seat share -- can be refined.

\begin{observation} \label{Obs1}
Consider a two-party system where the vote share of the majority party in each SSD has a continuous uniform distribution over the interval from $k-h$ to $k+h$ (the expected value is $k$) and $0.5 < k < 0.5+h$.
The majority party's preference order on vote transfer formulas with respect to its chance to get a majority is NVT $\succ$ PVT $\succ$ DVT.
\end{observation}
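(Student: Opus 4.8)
The plan is to handle the statement the way the paper does — as an empirical claim whose "proof" is a reproducible Monte Carlo computation — but to back the simulation with a structural argument explaining why the observed ordering is essentially forced. For the computational part, for each parameter triple $(k,h,m)$ in the stated grid and each of (say) $10^6$ runs I would draw $100$ i.i.d.\ values $v_1,\dots,v_{100}$ uniform on $[k-h,k+h]$, set $W=\#\{j:v_j>1/2\}$, and evaluate the \emph{realized} list shares $\ell_{DVT}=\tfrac1{100}\sum_j v_j$, $\ell_{PVT}$, and $\ell_{NVT}$ from the numerator/denominator expressions in the proof of Proposition~\ref{Prop3}, now with the drawn district votes in place of their conditional expectations. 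Since $s_i=\alpha W/100+(1-\alpha)\ell_i$ uses $W$ identically for all three rules, a run is a win under rule $i$ exactly when $\ell_i>t(W):=\bigl(\tfrac12-\alpha W/100\bigr)/(1-\alpha)$; tallying wins over the runs gives Monte Carlo estimates $\widehat p_i$ of the three majority probabilities, with binomial standard error at most $\tfrac12\cdot10^{-3}$, and I would report that at every grid point $\widehat p_{NVT}\ge\widehat p_{PVT}\ge\widehat p_{DVT}$, together with the net-advantage counts of Table~\ref{Table1} and Figure~\ref{Fig2}.

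Next I would explain the ordering structurally. Because $W$ enters $s_i$ identically for all three rules, the only runs on which the rules disagree about a majority are the \emph{pivotal} ones, where the threshold $t(W)$ falls inside the narrow band spanned by $\ell_{DVT},\ell_{PVT},\ell_{NVT}$; these are exactly the runs in which $W$ is near or below $50$, i.e.\ the majority party has underperformed its national share in the districts. Writing $\ell_{PVT}=\bigl(100\,\ell_{DVT}+P_{maj}\bigr)\big/\bigl(100+P_{maj}+P_{opp}\bigr)$, where $P_{maj}=\sum_{v_j<1/2}v_j$ is the majority party's losing-vote mass and $P_{opp}=\sum_{v_j>1/2}(1-v_j)$ that of the other party, one gets $\ell_{PVT}>\ell_{DVT}\iff P_{maj}/P_{opp}>\ell_{DVT}/(1-\ell_{DVT})$. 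On a pivotal run $\ell_{DVT}$ is close to $t(W)$, hence close to $1/2$ when $W\approx50$, and since every $v_j$ lies in $[k-h,k+h]$ with $h\le0.2$ the masses $P_{maj}$ and $P_{opp}$ are of comparable size, proportional to the numbers $100-W$ and $W$ of districts lost and won, so the inequality holds once $100-W\gtrsim W$, i.e.\ $W\lesssim50$. Adding the wasted-vote terms $A=\sum_{v_j>1/2}(2v_j-1)$ to the majority party and $B=\sum_{v_j<1/2}(1-2v_j)$ to the other party and redoing the computation gives $\ell_{NVT}>\ell_{PVT}$ on the same pivotal runs. This is just the qualitative content of Propositions~\ref{Prop1}--\ref{Prop2}: in the regime where the election is in jeopardy the transferred losing (then wasted) votes raise the majority party's list share, so most runs that DVT turns into a majority are also majorities under PVT, and most PVT-majority runs are also majorities under NVT — the residual counterexamples being absorbed into the net-advantage statistic.

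The step I do not expect to push through analytically — and the reason this is an Observation rather than a Proposition — is the control of the exceptional runs. None of the three pairwise dominations holds pointwise: the simulation itself records runs in which only DVT delivers a majority (the ``Majority: only DVT'' entry of Table~\ref{Table1}) and runs in which DVT and PVT deliver a majority but NVT does not (the ``Majority: DVT and PVT'' entry), so $\ell_{PVT}<\ell_{DVT}$ and $\ell_{NVT}<\ell_{PVT}$ genuinely occur for the majority party under sufficiently asymmetric district configurations; bounded support merely makes these rare, not impossible. Turning the heuristic above into a theorem would require bounding the probability of the exceptional configurations and showing it is dominated, for \emph{every} $(k,h,m)$ in the stated range, by the probability of the ``pivotal-gain'' configurations; that means working with the joint law of $W$ and of the order statistics of $v_1,\dots,v_{100}$ conditional on $W$, and its interaction with $\alpha$ through $m$, for which there is no usable closed form for $100$ single-seat districts. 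Hence the simulation, rather than a closed-form probability, is the evidence, and the most I would claim rigorously is the structural reduction of the preceding paragraph together with the monotonicity embodied in Propositions~\ref{Prop1}--\ref{Prop3}.
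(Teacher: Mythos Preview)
Your proposal is correct and matches the paper's treatment: Observation~\ref{Obs1} is not proved analytically in the paper but is simply stated as the conclusion of the Monte Carlo experiments summarized in Table~\ref{Table1} and Figure~\ref{Fig2}, so your simulation plan is exactly the paper's ``proof''. Where you go beyond the paper is in the structural reduction --- recasting the comparison as a single threshold $t(W)$ on the list share, identifying the pivotal runs as those with $W\lesssim 50$, and deriving the mediant-type inequality $P_{maj}/P_{opp}>\ell_{DVT}/(1-\ell_{DVT})$ for $\ell_{PVT}>\ell_{DVT}$ --- and in your explicit accounting of the exceptional runs (the nonzero ``Majority: only DVT'' and ``Majority: DVT and PVT'' entries) that rule out pointwise domination; the paper offers none of this heuristic scaffolding and simply reports the net-advantage counts, so your write-up is strictly more informative while remaining faithful to the empirical status the paper assigns to the claim.
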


Proposition~\ref{Prop3} and Observation~\ref{Obs1} support the conclusions derived analytically for the extreme cases in Propositions~\ref{Prop1} and \ref{Prop2}: application of DVT (despite it leads to the largest expected seat share) is a risky strategy for the majority party because of the diminished probability of winning the election, and NVT seems to dominate PVT from both aspects examined.

Finally, it is possible to outline some trade-offs between the three vote transfer formulas. Assume that the majority party focuses on its chance to obtain a majority of seats. Then, in the case of $h=0.1$ (low variance across SSDs), the majority party is approximately indifferent among NVT with $m=55$ ($\alpha=100/155 \approx 0.6452$), PVT with $m=75$ ($\alpha=100/175 \approx 0.5714$), and DVT with $m=100$ ($\alpha=0.5$) list mandates.

\begin{figure}[!ht]
\centering
\caption{Expected seat share in systems with the same chance to get a majority}
\label{Fig3}

\begin{subfigure}{0.48\textwidth}
	\centering
	\caption{$h = 0.1$}
	\label{Fig3a} 
\begin{tikzpicture}
\begin{axis}[width=0.95\textwidth, 
height=0.75\textwidth,
xmin=0.51,
xmax=0.54,
ymin=0.51,
ymax=0.64,
ymajorgrids,
]
    
\addplot [color=green,thick] coordinates { 
(0.51,0.53)
(0.52,0.56)
(0.53,0.59)
(0.54,0.62)
};

\addplot [color=red,thick] coordinates { 
(0.51,0.525614744000394)
(0.52,0.551223362273086)
(0.53,0.576819686712458)
(0.54,0.602397463839905)
};

\addplot [color=blue,thick] coordinates { 
(0.51,0.532258064516129)
(0.52,0.564516129032258)
(0.53,0.596774193548387)
(0.54,0.629032258064516)
};
\end{axis}
\end{tikzpicture}

\end{subfigure}
\begin{subfigure}{0.48\textwidth}
	\centering
	\caption{$h = 0.2$}
	\label{Fig3b} 
\begin{tikzpicture}
\begin{axis}[width=0.95\textwidth, 
height=0.75\textwidth,
xmin=0.51,
xmax=0.57,
ymin=0.51,
ymax=0.64,
ymajorgrids,
]
    
\addplot [color=green,thick] coordinates { 
(0.51,0.5175)
(0.52,0.535)
(0.53,0.5525)
(0.54,0.57)
(0.55,0.5875)
(0.56,0.605)
(0.57,0.6225)
};

\addplot [color=red,thick] coordinates { 
(0.51,0.515051157094379)
(0.52,0.53010313489227)
(0.53,0.545156757032987)
(0.54,0.560212853049529)
(0.55,0.575272261370916)
(0.56,0.590335832391907)
(0.57,0.60540443163394)
};

\addplot [color=blue,thick] coordinates { 
(0.51,0.518900776592367)
(0.52,0.537798956297475)
(0.53,0.556691950331943)
(0.54,0.57557718606709)
(0.55,0.594452114974269)
(0.56,0.613314220413401)
(0.57,0.632161025215212)
};
\end{axis}
\end{tikzpicture}

\end{subfigure}

\begin{center}
\begin{tikzpicture}
        \begin{customlegend}[legend columns=3,legend entries={DVT ($m=100$)$\quad$,PVT ($m=75$)$\quad$,NVT ($m=55$)}]
        \addlegendimage{color=green,very thick}
        \addlegendimage{color=red,very thick}
        \addlegendimage{color=blue,very thick}
        \end{customlegend}
\end{tikzpicture}
\end{center}

\end{figure}
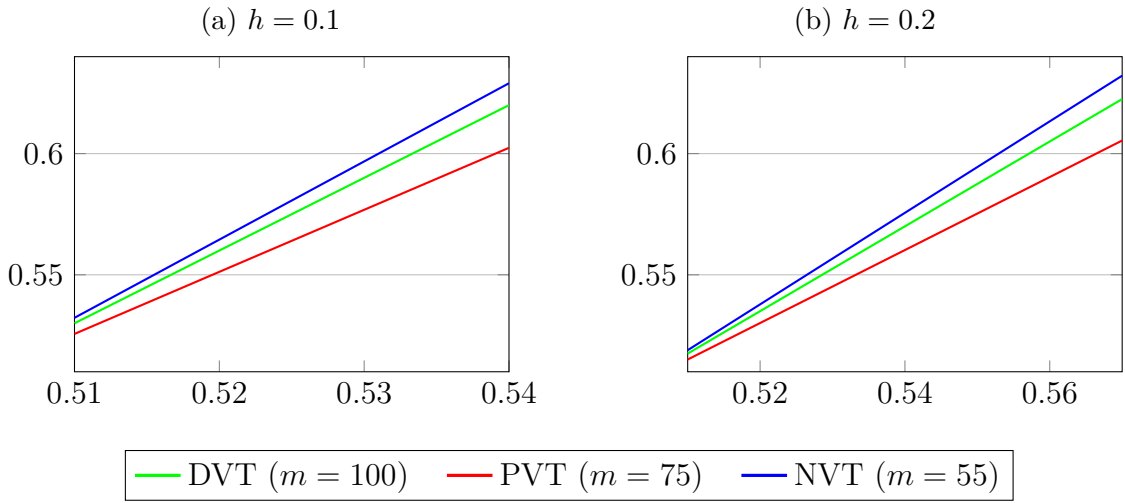


How they perform with respect to expected vote share, calculated in Proposition~\ref{Prop3}? According to Figure~\ref{Fig3a}, NVT with $m=55$ list mandates is the most favourable for the majority party both from the aspects of expected vote share and (marginally) its chance to get a majority. Figure~\ref{Fig3b} shows that the same preferences apply in the case of $h=0.2$ (high variance across SSDs).

To summarize, the majority party should use an appropriately calibrated NVT system in this model, if it concentrates on its average (expected) vote share and its chance to win the election.

\section{Conclusion} \label{Sec4}

The paper has surveyed a model of vote transfer systems in order to compare three different compensation mechanisms. Significant restrictions were necessary in order to to derive mathematically correct implications, including the identification of all candidates with a party, the assumption of no election threshold, and, most worryingly, the focus on a two party competition. Nevertheless, there remain unexplored issues even in this simple framework, for example, increasing the robustness of simulations or a deeper study of the trade-off among the three vote transfer formulas.

Regarding generalization, perhaps the most exciting question is whether these results can be extended for more parties. It is obvious that our conclusions do not change if the two-party system is valid only on the level of SSDs (for example, there are two regional parties which do not compete against each other in constituencies). Since the model does not contain any discontinuity, their list votes and number of mandates in SSDs can be added.

However, if there exist at least one constituency with three parties, the derivation of analytical results seems to be practically impossible. Simulations are also difficult to carry out, because at least two independent random variables should be introduced for each constituency and a lot of possibilities can be found to determine their correlation. For instance, it can occur that the third party has no chance to win a constituency, but it may also be a strong regional party which is a clear favorite in some districts. Therefore, due to the large number of scenarios in an election with at least three parties, this topic is left for future research.


\begin{thebibliography}{}

\bibitem[Benoit, 2001]{Benoit2001}
Benoit, K. (2001).
\newblock Evaluating {H}ungary's mixed-member electoral system.
\newblock In Shugart, M.~S. and Wattenberg, M.~P., editors, {\em Mixed-member
  electoral systems: The best of both worlds?}, pages 477--494. Oxford
  University Press, Oxford.

\bibitem[Benoit and Schiemann, 2001]{BenoitSchiemann2001}
Benoit, K. and Schiemann, J.~W. (2001).
\newblock Institutional choice in new democracies bargaining over {H}ungary's
  1989 electoral law.
\newblock {\em Journal of Theoretical Politics}, 13(2):153--182.

\bibitem[Bir\'o et~al., 2015]{BiroKoczySziklai2015}
Bir\'o, P., K\'oczy, L.~{\'A}., and Sziklai, B. (2015).
\newblock Fair apportionment in the view of the {V}enice {C}ommission's
  recommendation.
\newblock {\em Mathematical Social Sciences}, 77(5):32--41.

\bibitem[Bochsler, 2014]{Bochsler2014}
Bochsler, D. (2014).
\newblock Which mixed-member proportional electoral formula fits you best?
  {A}ssessing the proportionality principle of positive vote transfer systems.
\newblock {\em Representation}, 50(1):113--127.

\bibitem[Bochsler, 2015]{Bochsler2015}
Bochsler, D. (2015).
\newblock Bending the rules: electoral strategies under mixed electoral
  systems.
\newblock {\em Representation}, 51(2):261--267.

\bibitem[Massicotte and Blais, 1999]{MassicotteBlais1999}
Massicotte, L. and Blais, A. (1999).
\newblock Mixed electoral systems: a conceptual and empirical survey.
\newblock {\em Electoral Studies}, 18(3):341--366.

\bibitem[Raabe, 2015]{Raabe2015}
Raabe, J. (2015).
\newblock When it pays to lose: the fundamental flaw of the positive vote
  transfer electoral system.
\newblock {\em Representation}, 51(2):253--260.

\bibitem[Shugart and Wattenberg, 2001]{ShugartWattenberg2001}
Shugart, M.~S. and Wattenberg, M.~P. (2001).
\newblock {\em Mixed-member electoral systems: the best of both worlds?}
\newblock Oxford University Press, Oxford.

\bibitem[Tasn\'adi, 2008]{Tasnadi2008}
Tasn\'adi, A. (2008).
\newblock The extent of the population paradox in the {H}ungarian electoral
  system.
\newblock {\em Public Choice}, 134(3-4):293--305.

\end{thebibliography}

\end{document}